\begin{document}
\title{Capacity-achieving sparse superposition codes with spatially coupled VAMP decoder
\thanks{The first two authors contributed equally to this work and ${\dagger}$ marked the corresponding author.}
} 

\author{%
 \IEEEauthorblockN{Yuhao Liu, Teng Fu, Jie Fan, Panpan Niu, Chaowen Deng and Zhongyi Huang$^{\dagger}$}
\IEEEauthorblockA{Department of Mathematical Sciences, Tsinghua University, Beijing, China\\
                   Email:  \{yh-liu21, fut21, fanj21, npp21 and dcw21\}@mails.tsinghua.edu.cn, zhongyih@tsinghua.edu.cn} 
}


\maketitle


\begin{abstract}
     Sparse superposition (SS) codes provide an efficient communication scheme over the Gaussian channel, utilizing the vector approximate message passing (VAMP) decoder for rotational invariant design matrices \cite{hou2022sparse}. Previous work has established that the VAMP decoder for SS achieves Shannon capacity when the design matrix satisfies a specific spectral criterion and exponential decay power allocation is used \cite{xu2023capacity}.
    In this work, we propose a spatially coupled VAMP (SC-VAMP) decoder for SS with spatially coupled design matrices. Based on state evolution (SE) analysis, we demonstrate that the SC-VAMP decoder is capacity-achieving when the design matrices satisfy the spectra criterion.  Empirically, we show that the SC-VAMP decoder outperforms the VAMP decoder with exponential decay power allocation, achieving a lower section error rate. All codes are available on {https://github.com/yztfu/SC-VAMP-for-Superposition-Code.git}.
\end{abstract}

\section{Introduction}
Sparse superposition (SS) codes, also called sparse regression codes, were originally proposed by Barron and Joseph for reliable communication over the additive white Gaussian noise channel (AWGNC) \cite{barron2010toward, barron2011analysis, joseph2012least}. While the scheme was known to be capacity-achieving for error correction with (intractable) maximum likelihood decoder \cite{venkataramanan2019sparse}, its computational complexity grows exponentially with code lengths. To mitigate this, the adaptive successive hard-decision decoder, a polynomial-time decoder, was proposed in \cite{joseph2013fast, cho2013approximate}. 
However,  both theoretical and empirical performances remain suboptimal: the asymptotic results do not extend to practical finite section lengths, and the empirical section error rate at feasible code lengths remains high for rates near capacity.

An iterative soft-decision decoder for SS based on approximate message passing (AMP) algorithm \cite{donoho2009message, bayati2011dynamics}, with polynomial computational complexity, was proposed in  \cite{barbier2014replica}, showing improved error performance at finite code lengths. The AMP decoder for SS, with power allocation (PA)  or spatially coupled (SC)  Gaussian design matrix, was shown asymptotically capacity-achieving through non-rigorous replica  symmetric potential analysis \cite{barbier2017approximate, barbier2016proof}. This was later fully rigorously proven by Rush et al., using conditional techniques, establishing that the AMP decoder achieves Shannon capacity over AWGNC and provides non-asymptotic section error rate bounds\cite{rush2017capacity, rush2018error,rush2021capacity}. 
Similar analysis can be extended to show that the AMP decoder for SS with SC Gaussian design matrix, based on generalized AMP (GAMP) algorithm \cite{rangan2011generalized,feng2022unifying}, is capacity-achieving over memoryless channels \cite{barbier2019universal, liu2024error}.

In practice, non-Gaussian design matrix,  such as Hadamard and discrete cosine transform (DCT) matrices \cite{greig2017techniques}, which are well approximated by orthogonally invariant matrices, can reduce the computational complexity for AMP decoder. However, the analysis in \cite{rush2017capacity} does not hold for orthogonally invariant matrices. To address this, \cite{hou2022sparse} proposed a decoder for SS with more structured design matrices, namely, right
orthogonally invariant matrices, based on  vector approximate message passing (VAMP) algorithm \cite{rangan2019vector} (also known as orthogonal approximate message passing, OAMP \cite{ma2017orthogonal}). They conjectured that VAMP decoder for SS is capacity-achieving if the design matrices satisfy the spectra criterion \cite{hou2022sparse}. Subsequently, \cite{xu2023capacity} rigorously proved that VAMP decoder for SS equipped with exponential decay PA is capacity-achieving under this spectra criterion and provided a non-asymptotic performance characterization of section error rate.


However, the empirical performance of SS with exponential decay PA remains unsatisfactory. Inspired by the superior practical performance of SS with spatially coupled (SC) Gaussian design matrices \cite{barbier2017approximate, hsieh2018spatially}, we introduce, for the first time, a spatially coupled VAMP (SC-VAMP) decoder for SS with spatially coupled rotational invariant design matrices. We first derive the SC-VAMP decoder using the factor graph  \cite{kschischang2001factor} and the expectation consistent (EC) algorithm \cite{minka2001family, seeger2005expectation}. Next, we demonstrate that the SC-VAMP decoder with design matrices satisfying the spectral criterion is capacity-achieving over AWGNC by state evolution (SE) analysis in the large-section limit. Finally, we empirically show that the SC-VAMP decoder outperforms the VAMP decoder with exponential decay PA for SS \cite{xu2023capacity}, achieving a lower section error rate.

\textit{Notation}: 
We denote by $[n]$ the set $\{1, 2, \dots, n\}$ for a positive integer $n$. For a matrix $\bA \in \RR^{m \times m}$, we define $\langle \bA \rangle$ as $m^{-1} \operatorname{Tr}(\bA)$, where $\operatorname{Tr}(\bA)$ represents the trace of $\bA$. For a set $S$, $|S|$ denotes its cardinality. If a vector $\vec{\bp}$ is partitioned into $\sfC$ blocks, $\vec{\bp} = [\bp_1, \dots, \bp_{\sfC}]$, we use $\vec{\bp}[\sfc]$ to refer to the $\sfc$-th block $\bp_{\sfc}$. Additionally, we define $\vec{\bx} = \text{concate}\{\bx_{\sfc}| \sfc \in S\}$ as the vector $\vec{\bx}$ obtained by concatenating $\bx_{\sfc}$ for $\sfc \in S$  in ascending order of the indices.


\section{Spatially coupled SS construction}

In the context of SS codes, the \textit{message} vector $\bx = [\bx_{\text{sec}(1)}, \cdots, \bx_{\text{sec}(L)}] \in \RR^N$ consists of $L$ sections, each with $B$ entries, where $B$ is the \textit{section size} and $N=LB$. The set $\text{sec}(l):=\{(l-1)B+1, \cdots, lB\}$ collects the indices in the $l$-th section, with $\bx_{\text{sec}(l)} = [x_{(l-1)B+1}, \cdots, x_{lB}]$. Each section $\bx_{\text{sec}(l)}, l \in \{1, \cdots, L\}$ contains a single non-zero component, equal to one, whose position encodes the symbol to be transmitted. In the spatially coupled SS (SC-SS), the message is divided into $\sfC$ blocks equally, each containing $L/\sfC$ sections, such that $\bx = [\bx_1, \cdots, \bx_{\sfC}]$. Each block $\bx_{\sfc}, \sfc \in [\sfC]$ concatenates the $l_{\sfc}$-th section to $r_{\sfc}$-th section, i.e., $\bx_{\sfc} = [\bx_{\text{sec}(l_{\sfc})}, \cdots, \bx_{\text{sec}(r_{\sfc})}]$ where $l_{\sfc} = (\sfc-1)L/ \sfC+1$ and $r_{\sfc} = \sfc L/\sfC$. A \textit{codeword} $\bz = [\bz_1, \cdots, \bz_{\sfR}]\in \RR^M$ is divided into $\sfR$ blocks equally, with each block $\bz_{\sfr}$ generated from a fixed \textit{design matrix} $\bA_{\sfr} \in \RR^{M_{\sfr} \times N_{\sfr}}$, such that $\bz_{\sfr} = \bA_{\sfr} \vec{\bx}_{\sfr}$, for $\sfr \in [\sfR]$. The vector $\vec{\bx}_{\sfr}$ concatenates the  blocks $\bx_{\sfc}$ encoded by $\bA_{\sfr}$, with the indices of these blocks collected in $W_{\sfr}$, such that $M_{\sfr} = M /\sfR$ and $N_{\sfr} = N|W_{\sfr}| / \sfC$ and $\vec{\bx}_{\sfr} = \text{concate}(\{\bx_{\sfc}|\sfc \in W_{\sfr}\})$.
We consider random codes generated by $\bA_{\sfr}$, where $\bA_{\sfr}$ is drawn from rotational invariant ensembles, i.e., the singular value decomposition $\bA_{\sfr} = \bU_{\sfr} \sqrt{\bS_{\sfr}} \bV_{\sfr}^{\mathsf{T}}$ involves orthogonal bases $\bU_{\sfr}$ and $\bV_{\sfr}$ sampled uniformly from the orthogonal group $\cO(M_{\sfr})$ and $\cO(N_{\sfr})$, respectively.  
The diagonal matrix $\bS_{\sfr}$ contains the squares of the singular values of $\bA_{\sfr}$, denoted $(S_{\sfr i})_{i\leq N_{\sfr}}$, on its main diagonal. The empirical distribution $N_{\sfr}^{-1} \sum_{i\leq N_{\sfr}} \delta_{S_{\sfr i}}$ weakly converges to a well-defined, compactly supported probability density function as $N_{\sfr}, M_{\sfr} \to \infty$ (not necessarily proportionally). We denote  the aspect ratio of $\bA_{\sfr}$ as $\alpha_{\sfr} = M_{\sfr}/N_{\sfr}$ ($0<\alpha_{\sfr}<1$ generally) and $\rho_{\sfr} = (1-\alpha_{\sfr}) \delta_0 + \alpha_{\sfr} \rho_{\text{supp},\sfr}$ the limit spectral density of $B^{-1} \bA_{\sfr}^{\mathsf{T}} \bA_{\sfr}$, with $\rho_{\text{supp}, \sfr}$ the limit spectral density on $\RR^{+}$ as $L \to \infty$. The overall (design) rate of the SC-SS is $R_{\text{all}} = L \log(B) / M = N\log(B) / (MB)$, and the rate associated with $\boldsymbol{z}_{\sfr}$ is $R_{\sfr} = N_{\sfr} \log(B) / (M_{\sfr}B) = \sfR |W_{\sfr}| R_{\text{all}} / \sfC$. Thus, the code is fully specified by the tuple $(M, R_{\text{all}}, B, \sfR, \sfC, \{W_{\sfr}, \rho_{\sfr}\}_{\sfr=1}^{\sfR})$. In this paper, we set $\sfC = \Gamma, \sfR = \Gamma+W-1$ and $\vartheta = (\Gamma+W-1)/\Gamma$, considering the following $\{W_{\sfr}\}_{\sfr=1}^{\sfR}$ inspired by the coupling structure shown in \cite{barbier2017approximate,barbier2019universal,rush2021capacity,takeuchi2023orthogonal,cobo2024bayes}
\begin{equation*}
W_{\sfr} = 
\begin{cases}
\{1, 2, \dots, \sfr\}, & 1 \leq \sfr < W, \\
\{\sfr-W+1, \dots, \sfr\}, & W \leq \sfr \leq \Gamma, \\
\{\sfr+1-W, \dots, \Gamma\}, & \Gamma < \sfr \leq \Gamma+W-1.
\end{cases}
\end{equation*}
With this structure, a SC-SS can be defined by the tuple $(M, R_{\text{all}}, B, \Gamma, W, \{\rho_{\sfr}\}_{\sfr=1}^{\Gamma+W-1})$. For example, a SC-SS with parameters $\Gamma = 5, W=2$ is shown in Fig.~\ref{fig:SC_structure}. For the sake of clarity and convenience, we will interchangeably use $\sfC$ and $\Gamma$, as well as $\sfR$ and $\Gamma + W - 1$, in the following context.
\begin{figure}
    \centering
    \includegraphics[width=1.05\linewidth]{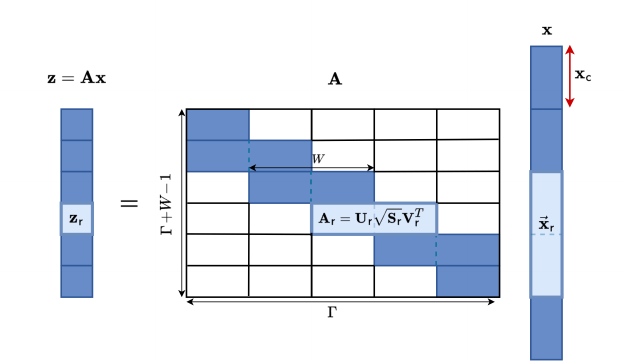}
    \vspace{-5mm}
    \caption{A SC-SS with spatial coupling parameters $\Gamma = 5$ and $W = 2$, thus $\sfR = 6$, $\sfC = 5$ and $\vartheta = 6/5$. Each $\bx_{\sfc}$ consists of $L/\Gamma$ for $\sfc \in [\sfC]$, and each $\vec{\bx}_{\sfr}$ consists of  $|W_{\sfr}| L / \Gamma$ sections for $\sfr \in [\sfR]$. Each design matrix $\bA_{\sfr}$ contains $M/(\Gamma+W-1)$ rows and $N|W_{\sfr}|/ \Gamma$ columns for $\sfr \in [\sfR]$.}
    \label{fig:SC_structure}
    \vspace{-6mm}
\end{figure}

We enforce the power constraint $\|\bz\|_2^2/M = 1+ o_L(1)$ by tuning $\bA_{\sfr}$'s spectrum to satisfy $\sum_{\sfr=1}^{\sfR} \int \lambda \rho_{\text{supp}, \sfr}(\lambda) \mathrm{d} \lambda = \sfR$ in the large $L$ limit. In this paper, we adopt the simplified setting that $\int \lambda \rho_{\text{supp}, \sfr} (\lambda) \mathrm{d} \lambda = 1$, ensuring $\| \bz_{\sfr}\|_2^2/M_{\sfr} = 1 + o_{L}(1)$ for all $\sfr \in [\sfR]$. Codewords are transmitted through an $\text{AWGNC}$, i.e., the received corrupted codewords are $\by = \bz + \bn$ with \iid $n_i \sim \cN(0, \sigma^2), i\leq M$, so that the signal-to-noise ratio is $\snr = \sigma^{-2}$. Then Shannon capacity of AWGNC is $\mathcal{C} = \log (1 + \snr) / 2$. 
In communication systems, one usually prefers right rotational invariant ensembles $\bA_{\sfr} = \sqrt{\bS_{\sfr}} \bV_{\sfr}^{\top}$. The choice
does not affect the theoretical analysis, since we can left-multiply $\by_{\sfr}$ by $\operatorname{diag} ( \{\bU_{\sfr} \}_{\sfr=1}^{\sfR})$ without altering the distributions of $\bz$ and noise $\bn$.

\section{SC-VAMP-based decoder for SC-SS codes}
The minimum mean-square error (MMSE) estimator, which demonstrates excellent performance \cite{venkataramanan2019sparse}, is given by the 
expectation of the Bayes posterior
\begin{equation*}
    P(\bx|\by, \{\bA_{\sfr}\}_{\sfr}^{\sfR}) \propto \prod_{\sfr \leq \sfR} e^{-\frac{\snr}{2}\|\by_{\sfr} - \bA_{\sfr} \vec{\bx}_{\sfr}\|_2^2} \prod_{l \leq L} P_0(\bx_{\text{sec}(l)}).
\end{equation*}
The hard constraints for the sections of the message are enforced by the prior distribution $P_0(\bx_{\text{sec}(l)}) = B^{-1} \sum_{i \in \text{sec}(l)} \delta_{x_{i}, 1} \prod_{j \in \text{sec}(l), j \neq i} \delta_{x_{j}, 0}$.
However, direct calculation of the Bayes posterior expectation is intractable due to the high dimensional integration with time complexity $O(B^L)$. The proposed SC-VAMP algorithm (see Algorithm \ref{algo:SC-VAMP}) computes the MMSE estimator with polynomial time complexity.
\begin{algorithm}
\caption{SC-VAMP-based decoder for SC-SS codes}
\label{algo:SC-VAMP}
\setstretch{1.1}
\begin{algorithmic}[1]
\REQUIRE Max iteration $K$, design matrices $\{\bA_{\sfr}\}_{\sfr}^{\sfR}$, noised codeword $\by$
\STATE Initialize $\vec{\bp}_{1,\sfr}^0 = \bm{0}, \gamma_{1, \sfr}^0 = B$ for $\sfr \in [\sfR]$

\FOR{$k=0,1,\ldots, K$}
    \STATE // LMMSE estimation
    \FOR{$\sfr = 1, \ldots, \sfR$}
        \STATE $\vec{\bx}_{1, \sfr}^{k} = \bg_1(\vec{\bp}_{1, \sfr}^{k}, \gamma_{1,\sfr}^k, \by_{\sfr}, \bA_{\sfr} )$
        \STATE $\alpha_{1, \sfr}^k = \langle \bg_1^{\prime}(\vec{\bp}_{1, \sfr}^{k}, \gamma_{1,\sfr}^k, \by_{\sfr}, \bA_{\sfr}) \rangle$
        \STATE $\eta_{1, \sfr}^{k} = \gamma_{1, \sfr}^{k}  / \alpha_{1, \sfr}^{k}$
        \STATE $\gamma_{2, \sfr}^{k} = \eta_{1, \sfr}^{k} - \gamma_{1, \sfr}^{k}$
        \STATE $\vec{\bp}_{2, \sfr}^{k} = (\eta_{1,\sfr}^k \vec{\bx}_{1, \sfr}^k - \gamma_{1,\sfr}^k \vec{\bp}_{1,\sfr}^{k})/ \gamma_{2, \sfr}^{k}$
    \ENDFOR     
    \STATE // Denoising
    \FOR{$\sfc = 1, \ldots, \sfC$}
        \STATE $\hat{\gamma}_{2,\sfc}^{k} = \sum_{\sfr = \sfc}^{\sfc+W-1} \gamma_{2,\sfr}^k$ 
        \STATE $\hat{\bp}_{2,\sfc}^{ k} = (\sum_{\sfr = \sfc}^{\sfc+W-1} \gamma_{2, \sfr}^k \vec{\bp}_{2, \sfr}^{k}[\sfc + N(\sfr) -\sfr])/ \hat{\gamma}_{2,\sfc}^{k}$
        \STATE $\left(N(\sfr) = |W_{\sfr}| \II \{\sfr \leq W\} + W \II \{\sfr>W\} \right)$
        
        \STATE $\hat{\bx}^{ k}_{2, \sfc} = \bg_2(\hat{\bp}_{2,\sfc}^{k}, \hat{\gamma}_{2,\sfc}^{k})$
        \STATE $\hat{\alpha}_{2,\sfc}^{k} = \langle \bg_2^{\prime}(\hat{\bp}_{2,\sfc}^{k}, \hat{\gamma}_{2,\sfc}^{k})\rangle$
        \STATE $\hat{\eta}^{k}_{2,\sfc}= \hat{\gamma}_{2, \sfc}^{k} / \hat{\alpha}_{2, \sfc}^{k}$
    \ENDFOR
    \STATE // Concatenating
    \FOR{$\sfr = 1, \ldots, \sfR$}
        \STATE $\vec{\bx}_{2, \sfr}^{k} = \text{concate}(\{\hat{\bx}_{2, \sfc}^{ k}| c \in W_{\sfr}\})$ 
        \STATE $\eta_{2, \sfr}^{k} =|W_{\sfr}| / \left[\sum_{c \in W_{\sfr}} (\hat{\eta}_{2, \sfc}^{ k})^{-1} \right] $
        \STATE $\gamma_{1, \sfr}^{k+1} = \eta_{2, \sfr}^{k} - \gamma_{2, \sfr}^k$
        \STATE $\vec{\bp}_{1, \sfr}^{k+1} = (\eta_{2,\sfr}^k \vec{\bx}_{2, \sfr}^k - \gamma_{2, \sfr}^k \vec{\bp}_{2, \sfr}^k) / \gamma_{1, \sfr}^{k+1}$    
    \ENDFOR
    
\ENDFOR
\RETURN $\{ \tilde{\bx}_{\sfc}^K \}_{\sfc = 1}^{\sfC}$ (we define $\tilde{\bx}_{\sfc}^k = \hat{\bx}_{2,\sfc}^{k}$ for $\sfc \in [\sfC]$)
\end{algorithmic}
\end{algorithm}

VAMP was originally derived for generalized linear estimation \cite{rangan2019vector,ma2017orthogonal}, and subsequently extended with a spatial coupling structure to achieve the information-theoretic compressing limitation \cite{takeuchi2023orthogonal}. We utilize the factor graph shown in Fig.~\ref{fig:SC_factor}, based on the  EC algorithm  to derive the SC-VAMP decoder for SS code. The primary distinction from the factor graph in \cite{rangan2019vector} is that the variable factor $\bx_{\sfc}$ is connected to multiple function factors $\delta_{\sfr}$. To address this modification, we employ uniform diagonalization \cite{fletcher2016expectation}, i.e., replacing the original message $\mathcal{N}\left(\vec{\bx};\vec{\bx}_{2, \sfr}^{k}, \text{diag}(\{ (\hat{\eta}_{2, \sfc}^k)^{-1} \bI | \sfc \in W_{\sfr}\})\right)$ passed from the variable factors $\bx_{\sfc}$ to the function factor $\delta_{\sfr}$ with the approximation 
$\mathcal{N}(\vec{\bx};\vec{\bx}_{2, \sfr}^{k}, (\eta_{2,\sfr}^{k})^{-1} \bI_{N_{\sfr}})$, where $\eta_{2, \sfr}^{k} =|W_{\sfr}| / \left[\sum_{c \in W_{\sfr}} (\hat{\eta}_{2, \sfc}^{ k})^{-1} \right] $. This modification leads to the concatenating part in Algorithm \ref{algo:SC-VAMP}.

With message passing rules \cite{rangan2019vector} and uniform diagonalization, we derive the SC-VAMP for SS codes where the only difference from canonical SC-VAMP is that the so-called nonlinear denoiser $\bg_2(\bp, \gamma)$ acts section-wise rather than component wise. In full generality, the denoiser is defined as $\bg_2(\bp, \gamma) = \EE[\bX|\bP = \bp]$ for the random variable $\bP = \bX + \sqrt{\gamma}^{-1} \bZ$ with $\bX \sim P_0^{\otimes (L/\sfC)}$ and $\bZ \sim \cN(0, \bI_{N/\sfC})$. Plugging $P_0$ yields a component-wise expression for the denoiser and its variance:
\begin{equation*}
\begin{cases}
\left[\bg_{2}(\bp, \gamma)\right]_{i}=\frac{\exp \left( \gamma p_{i} \right)}{\sum_{j \in \text{sec}(l_{i})} \exp \left(\gamma p_{j}  \right)}, \\
\left[\bg_{2}^{\prime}(\bp, \gamma)\right]_{i}=\gamma\left[\bg_{2}(\bp, \gamma)\right]_{i}\left(1-\left[\bg_{2}(\bp, \gamma)\right]_{i}\right),
\end{cases}
\end{equation*}
where $\left[ \bg_2^{\prime}(\bp, \gamma)\right]_{i}:= \partial_{p_i} [\bg_2(\bp, \gamma)]_i$, $l_i$ is the index of the section to which the $i$-th scalar component belongs. $\bg_2(\vec{\bp}, \gamma, \by, \bA)$ can be interpreted as the MMSE estimate of the random vector $\vec{\bx}$ given the observation $\by \in \RR^{m}$ and the matrix  $\bA \in \RR^{m \times n}$, with $\by \sim \mathcal{N}(\bA \vec{\bx}, \snr^{-1} \bI_{n})$ and the prior  is $\vec{\bx} \sim \cN(\vec{\bp}, \gamma^{-1} \bI_{n})$:
\begin{equation*}
\begin{cases}
\bg_{1}(\vec{\bp}, \gamma, \by, \bA) =\left(\operatorname{snr} \bA^{\mathsf{T}} \bA+\gamma \bI_{n}\right)^{-1}\left( \operatorname{snr} \bA^{\mathsf{T}} \by+\gamma \vec{\bp}\right), \\
\langle\bg_{1}^{\prime}(\vec{\bp}, \gamma, \by, \bA)\rangle =\gamma n^{-1} \operatorname{Tr}\left[\left(\operatorname{snr} \bA^{\mathsf{T}} \bA+\gamma \bI_n\right)^{-1}\right].
\end{cases}
\end{equation*}

The primary computational cost in the SC-VAMP decoder stems from the matrix inversion in the estimator $\bg_2$, which results in a time complexity of $O(B^3L^3)$ per iteration. However, if all singular values of $\bA$ are equal to $\sqrt{a}$, $\bg_1$ can be simplified as follows:
\begin{equation*}
    \bg_1(\vec{\bp}, \gamma, \by, \bA) = \vec{\bp} + \frac{\snr}{\gamma+a\snr} \bA^{\mathsf{T}}(\by - \bA \vec{\bp}).
\end{equation*}
In this case, the main computational cost is reduced to matrix-vector multiplication.
Furthermore, if the design matrices $\{{\bA}\}_{\sfr=1}^{\sfR}$ are generated from structured matrices, such as
DCT matrices or 
Hadamard matrices \cite{abbara2020universality}, the overall time complexity per iteration is reduced to $O((BL) \log (BL))$.

\begin{figure}
    \centering
    \includegraphics[width=1.05\linewidth]{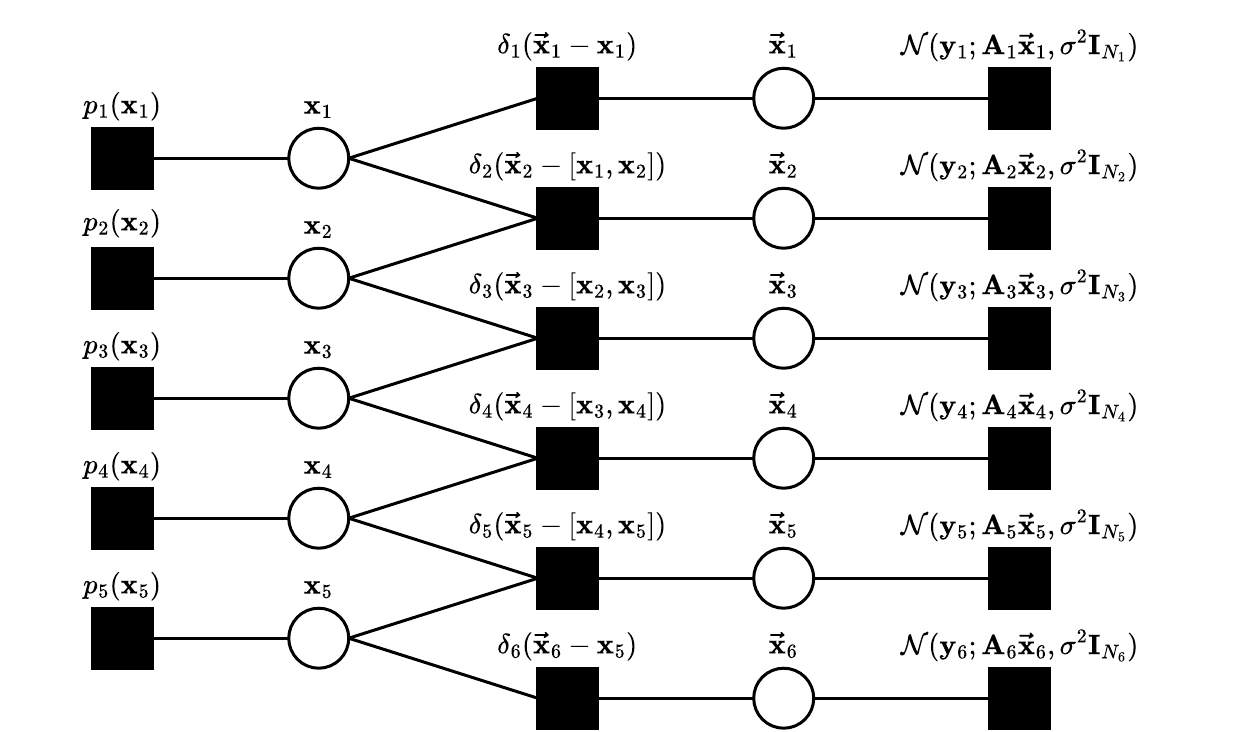}
    \vspace{-5mm}
    \caption{Factor graph  representation of the joint probability distribution of  $\{\bx_{\sfc}\}_{\sfc}^{\sfC}$ and $\{ \vec{\bx}_{\sfr}\}_{\sfr}^{\sfR}$, with spatial coupling parameters $\Gamma = 5, W=2$ and $\vartheta = 6/5$. The function factors $\{ p_{\sfc}(\cdot)\}_{\sfc}^{\sfC}$ represent $P_0^{\otimes(L/\sfC)}$, and the factors
    $\{ \delta_{\sfr}(\cdot)\}_{\sfr}^{\sfR}$  represent the Dirac delta distributions.}
    \label{fig:SC_factor}
    \vspace{-5mm}
\end{figure}


\section{State evolution analysis}
To evaluate the performance of the SC-VAMP estimator $\tilde{\bx}_{\sfc} = (\tilde{\bx}_{l_{\sfc}}, \ldots, \tilde{\bx}_{r_{\sfc}})$, we define two key error metrics for each block: the mean squared error (MSE) per section, denoted by $\{ E_{\sfc}^k\}_{\sfc=1}^{\sfC}$ and the section error rate (SER), denoted by $\{ \text{SER}_{\sfc}^{k}\}_{\sfc=1}^{\sfC}$. These metrics are expressed as follows:
\begin{equation*}
    E_{\sfc}^k = \frac{\sfC}{L}\| \tilde{\bx}_{\sfc}^k - \bx_{\sfc}\|_2^2, \quad \text{SER}_{\sfc}^{k} = \frac{\sfC}{L} \sum_{l=l_{\sfc}}^{r_{\sfc}} \II\{\tilde{\bx}_{\text{sec}(l)} \neq {\bx}_{\text{sec}(l)}\},
\end{equation*}
for $\sfc \in [\sfC]$, where $\II \{\cdot\}$ is the indicator function. In the asymptotic $L \to \infty$ limit (with $\alpha, B, R_{\text{all}}$ fixed), the MSE metric can be tracked by the following SE equations
\begin{align*}
    \bar{\eta}_{1, \sfr}^{k} &= [\cE_{1,\sfr}(\bar{\gamma}_{1,\sfr}^k)]^{-1},\quad \bar{\gamma}_{2,\sfr}^k = \bar{\eta}_{1, \sfr}^k - \bar{\gamma}_{1, \sfr}^k, \quad \sfr \in [\sfR],  \\
    \bar{\hat{\gamma}}_{2, \sfc}^k &= \sum_{\sfr = \sfc}^{\sfc+W-1} \bar{\gamma}_{2,\sfr}^k, \quad \bar{\hat{\eta}}_{2, \sfc}^k = [\cE_2(\bar{\hat{\gamma}}_{2, \sfc}^k)/B]^{-1},  \quad \sfc \in [\sfC],  \\
     \bar{\eta}_{2,\sfr}^k &= |W_{\sfr}| / \sum_{\sfc \in  W_{\sfr}} (\bar{\hat{\eta}}_{2, \sfc}^k)^{-1}, \quad \bar{\gamma}_{1, \sfr}^{k+1} = \bar{\eta}_{2, \sfr}^k - \bar{\gamma}_{2, \sfr}^k,   \quad \sfr \in [\sfR],
\end{align*}
with initialization $\bar{\gamma}_{1, \sfr}^0 = B$ for $\sfr \in [\sfR]$. Here, $\cE_{1, \sfr}(\gamma) = \EE_{\rho_{\sfr}} (\snr B\lambda+\gamma)^{-1}$ and 
\begin{equation*}
    \cE_2(\gamma) = 1 - \mathbb{E}\left[\frac{e^{  \sqrt{\gamma}U_{1}}}{e^{ \sqrt{\gamma}U_{1}  }+e^{-\gamma} \sum_{j=2}^{B} e^{\sqrt{\gamma}U_{j}  }}\right]
\end{equation*}
with $U_{1}, \ldots, U_{B} \stackrel{\text { i.i.d. }}{\sim} \mathcal{N}(0,1)$. The error function $\cE_2(\gamma)$ can be interpreted as the Bayes posterior variance, i.e., $\cE_2(\gamma) = \EE \| \bS - \EE[\bS | \bS + \sqrt{\gamma}^{-1} \bZ] \|_2^2$ with $\bS \sim P_0$ and $\bZ \sim \mathcal{N}(0, \bI_B)$. With initialization $\vec{p}_{1, \sfr}^{0} = 0, \gamma_{1,\sfr}^0 = \bar{\gamma}_{1, \sfr}^0 = B$ for $\sfr \in [\sfR]$, it follows that for any finite $B$, $ \lim_{L \to \infty} E_{\sfc}^K = B( \bar{\hat{\eta}}_{2, \sfc}^K)^{-1}$ almost surely for $\sfc \in [\sfC]$. Next, we analyze SE in the limit of large section size, i.e., as $B \to \infty$ while keeping $R_{\text{all}}$ fixed (necessarily $\alpha = \Theta(\log B/ B) \to 0$). With initialization $\gamma_{1,\sfr}^0 = \Theta(B)$, the order of $\bar{\eta}_{1, \sfr}^{k}, \bar{\hat{\eta}}_{2, \sfc}^k, \bar{\eta}_{2, \sfr}^k, \bar{\gamma}_{1, \sfr}^k$ is $\Theta(B)$, while the order of $ \bar{\gamma}_{2, \sfr}^k$ and $\bar{\hat{\gamma}}_{2, \sfc}^k$ is $\Theta(\log B)$. Introducing the  rescaled variables $\sigma_{\sfr}^k =B / \bar{\gamma}_{1, \sfr}^k$, $\tau_{\sfc}^k =\log B /\bar{\hat{\gamma}}_{2, \sfc}^{k}$, $\psi_{\sfc}^{k} =B/ \bar{\hat{\eta}}_{2, \sfc}^k $, and $\phi_{\sfr}^{k} =  R_{\sfr} \bar{\gamma}_{2,\sfr}^k / \log B$, the limit SE equations are given by:
\begin{align}
    & \phi_{\sfr}^{k} = F_{\sfr}(\sigma_{\sfr}^k), \quad \sfr \in [\sfR], \label{eq: phi_r}\\
    & \tau_{\sfc}^{k} = R_{\text{all}}\left[\sum_{\sfr=\sfc}^{\sfc+W-1} \frac{\phi_{\sfr}^k}{\vartheta|W_{\sfr}| } \right]^{-1}, \quad \sfc \in [\sfC], \label{eq:tau_c}\\
    &\psi_{\sfc}^{k} = \II \{ \tau_{\sfc}^k > \frac{1}{2} \}, \quad \sfc \in [\sfC], \label{eq:psi_c}\\
    & \sigma_{\sfr}^{k+1} = \frac{1}{|W_{\sfr}|} \sum_{\sfc \in W_{\sfr}} \psi_{\sfc}^{k} , \quad \sfr \in [\sfR] \label{eq:sigma_r},
\end{align}
with the initialization $\sigma_{\sfr}^0 = 1$ for $\sfr \in [\sfR]$. Here, we exploit the phase transition of $\cE_2(\gamma)$ \cite{barbier2016proof,rush2021capacity}, where $\lim_{B \to \infty} \cE_2(\gamma) = \II \{ (\lim_{B\to\infty }  \log B /\gamma) > 1/2\}$ and $F_{\sfr}(x) = \EE_{\rho_{0, \sfr}} \frac{\lambda }{\lambda x + \sigma^2 }$  for $\sfr \in [\sfR]$, where $\rho_{0, \sfr}$ is the $\alpha_{\sfr} \to 0$ limit of $\rho_{\text{supp},\sfr}$. If all the limit to $\rho_{\text{supp}, \sfr}$ converges to the same $\rho_{0}$ (e.g., when $\bA_{\sfr}$ is a Gaussian matrix or $\rho_{\text{supp}, \sfr}$ is independent of $\alpha_{\sfr}$ and $B$ \cite{hou2022sparse,liu2022sparse}), we define $F_{\sfr}(x) = F(x) = \EE_{\rho_0} \frac{\lambda }{\lambda x+ \sigma^2}$. This allows us to track $E_{\sfc}^k$ with $\psi_{\sfc}^k$ in the large section limitation. In Proposition \ref{thm1}, we analyze the evolution of the limited SE.


\newtheorem{theorem}{Proposition}
\begin{theorem} 
    \label{thm1}
    Consider the limit SE with the same $F_{\sfr}(x)=F(x)$ and initialization  $\sigma^0_{\sfr} =1$ for $\sfr \in [\sfR]$. We define $R_{\textnormal{alg}} := F(1) / 2$ and $R_{\textnormal{IT}} := \int_{0}^{1} F(x) \mathrm{d} x / 2$. Then we have 
    \begin{itemize}
        \item If $R_{\textnormal{all}}<\vartheta^{-1} R_{\textnormal{alg}}$, then $\psi_{\sfc}^1 = 0$ for $\sfc \in [\sfC]$;
        \item If $\vartheta^{-1} R_{\textnormal{alg}}<R_{\textnormal{all}}<\vartheta^{-1} R_{\textnormal{IT}}$ and 
        \begin{equation} \label{eq:W}
            W>\max \{ \lceil 1/ l^*(\vartheta)\rceil, \lceil  1/h^*(\vartheta, \Delta)\rceil \},
        \end{equation}
        where $\Delta:=\vartheta^{-1} R_{\textnormal{IT}} - R_{\textnormal{all}}$, $l^*(\vartheta)$ is the unique solution of the equation in $(0,1)$: 
        \begin{equation} \label{eq:lx}
            l(x) = x-\ln x = \frac{\vartheta R_{\textnormal{all}}}{R_{\textnormal{alg}}}
        \end{equation}
        and $h^*(\vartheta,\Delta)$ is the unique solution of the equation in $(0,1)$:
        \begin{equation} \label{eq:hx}
            h(x) = \int_0^x F(t) \mathrm{d}t - F(1) x = 2 \vartheta \Delta.
        \end{equation}
         We have $\psi_{\sfc}^{k} = \psi_{\Gamma-\sfc+1}^k = 0$ for $1 \leq \sfc \leq \min\{(k+1)g, \lceil \Gamma / 2 \rceil\}$ where $g := \min \{ \lfloor h^*(\vartheta, \Delta) W \rfloor,  \lfloor l^*(\vartheta)     W\rfloor\}$.
        Furthermore, after $K = 1 + \lceil \Gamma / (2g) \rceil$ iterations, we have $\psi_{\sfc}^K = 0$ for $\sfc \in [\sfC]$.
    \end{itemize}    
\end{theorem}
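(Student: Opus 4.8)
The plan is to recast the limit SE \eqref{eq: phi_r}--\eqref{eq:sigma_r} as a \emph{monotone} scalar recursion on the error profile $\sigma^k$ and to show that the decoded region $\{\sfc:\psi^k_\sfc=0\}$ nucleates at the two boundaries and then propagates inward at a guaranteed speed $g$ per iteration. First I would rewrite the decoding condition: since $\phi^k_\sfr=F(\sigma^k_\sfr)$, the equations give $\psi^k_\sfc=0 \iff \sum_{\sfr=\sfc}^{\sfc+W-1}F(\sigma^k_\sfr)/|W_\sfr|\ge 2\vartheta R_{\textnormal{all}}$. Because $F$ is strictly decreasing, the composed one-step map $\sigma^k\mapsto\sigma^{k+1}$ is coordinatewise nonincreasing in $\sigma^k$, and since $\sigma^0_\sfr=1$ is the maximal admissible value, $(\sigma^k_\sfr)_k$ is nonincreasing; hence the decoded set only grows in $k$, and it suffices to bound the worst admissible profile consistent with a given decoded front. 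I would also record two monotonicity facts used repeatedly: $l(x)=x-\ln x$ is strictly decreasing on $(0,1)$ from $+\infty$ to $1$, and $h(x)=\int_0^x F-F(1)x$ is strictly increasing on $(0,1)$ from $0$ to $2R_{\textnormal{IT}}-2R_{\textnormal{alg}}$ (using $F(x)\ge F(1)$); under the stated regime one checks $\vartheta R_{\textnormal{all}}/R_{\textnormal{alg}}\in(1,R_{\textnormal{IT}}/R_{\textnormal{alg}})$ and $2\vartheta\Delta\in(0,h(1))$, which makes $l^*$ and $h^*$ well defined and unique.

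The first bullet and the nucleation step share one computation. With $\sigma^0_\sfr=1$ one has $\sum_{\sfr=\sfc}^{\sfc+W-1}F(1)/|W_\sfr|=F(1)\,s_\sfc$, where $s_\sfc:=\sum_{\sfr=\sfc}^{\sfc+W-1}|W_\sfr|^{-1}$ satisfies $s_\sfc\ge 1$ everywhere and, for $\sfc<W$, a left Riemann bound on the harmonic part gives $s_\sfc\ge l(\sfc/W)$. For the first bullet, $R_{\textnormal{all}}<\vartheta^{-1}R_{\textnormal{alg}}$ forces $2\vartheta R_{\textnormal{all}}<F(1)\le F(1)s_\sfc$, so $\psi^0_\sfc=0$ for all $\sfc$; then $\sigma^1=0$ and $F(0)\ge F(1)$ give $\psi^1_\sfc=0$ (indeed $\psi^k_\sfc\equiv 0$). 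For the second bullet I would first establish the seed: whenever $\sfc\le\lfloor l^*(\vartheta)W\rfloor$ ($<W$, since $l^*<1$) we have $s_\sfc\ge l(\sfc/W)\ge l(l^*)=\vartheta R_{\textnormal{all}}/R_{\textnormal{alg}}$, hence $F(1)s_\sfc\ge 2\vartheta R_{\textnormal{all}}$; as this uses only $\sigma_\sfr\le 1$, it shows $\psi^k_\sfc=0$ for all $k$ and all $\sfc\le\lfloor l^*W\rfloor$ (and symmetrically near $\Gamma$).

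The core is the inductive propagation: $\psi^k_\sfc=0$ for $\sfc\le(k+1)g$ implies $\psi^{k+1}_\sfc=0$ for $\sfc\le(k+2)g$. Given a front at $a:=(k+1)g$, the update and $\psi^k\le 1$ give $\sigma^{k+1}_\sfr\le\hat\sigma^{(a)}_\sfr:=\min\{1,(\sfr-a)^+/|W_\sfr|\}$, so by monotonicity I may evaluate the decoding sum at the worst admissible ramp $\hat\sigma^{(a)}$. In the bulk ($W\le\sfc$, $\sfc+W-1\le\Gamma$) all $|W_\sfr|=W$, and a second left Riemann bound on the decreasing map $u\mapsto F(\min\{1,u\})$ yields $\sum_{\sfr=\sfc}^{\sfc+W-1}F(\hat\sigma^{(a)}_\sfr)/W\ge\int_y^{y+1}F(\min\{1,u\})\,\mathrm{d}u=2R_{\textnormal{IT}}-h(y)$ with $y=(\sfc-a)/W$; this is $\ge 2\vartheta R_{\textnormal{all}}$ exactly when $h(y)\le 2\vartheta\Delta$, i.e.\ when $\sfc\le a+\lfloor h^*W\rfloor$. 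Combining the bulk advance $\lfloor h^*W\rfloor$ with the boundary seed $\lfloor l^*W\rfloor$ and taking $g=\min\{\lfloor h^*W\rfloor,\lfloor l^*W\rfloor\}$ as a uniform per-step speed closes the induction up to $\lceil\Gamma/2\rceil$, the symmetry $\sfc\leftrightarrow\Gamma-\sfc+1$ handling the right front. For termination I would note that once both fronts reach $\lceil\Gamma/2\rceil$ the whole lattice is decoded, and $(K+1)g\ge g+\Gamma/2\ge\lceil\Gamma/2\rceil$ for $K=1+\lceil\Gamma/(2g)\rceil$ (using $g\ge 1$), giving $\psi^K_\sfc=0$ for all $\sfc$.

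The main obstacle is the transition region near each boundary, where $\sfc<W$ so that the window $\{\sfc,\dots,\sfc+W-1\}$ mixes boundary indices ($|W_\sfr|=\sfr<W$, governed by the harmonic function $l$) and bulk indices ($|W_\sfr|=W$, governed by the integral $h$). There the clean bulk Riemann estimate does not apply verbatim: the smaller $|W_\sfr|$ enlarge the weights but simultaneously raise $\hat\sigma^{(a)}_\sfr$ (lowering $F$), so neither term dominates the other, and one must bound the combined sum and verify the front still advances by at least $g$. This is precisely why the speed is the minimum $\min\{\lfloor h^*W\rfloor,\lfloor l^*W\rfloor\}$ rather than either threshold alone. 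Making this combined boundary estimate rigorous—together with controlling the $O(1/W)$ discretization error so that the advertised integer floors come out exactly—is the step I expect to require the most care.
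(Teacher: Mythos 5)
Your proposal follows essentially the same route as the paper's proof: recasting $\psi_{\sfc}^k=0$ as the window-sum condition $\sum_{\sfr=\sfc}^{\sfc+W-1}F(\sigma_{\sfr}^k)/|W_{\sfr}|\ge 2\vartheta R_{\textnormal{all}}$, establishing uniqueness of $l^*$ and $h^*$ from the monotonicity of $l$ and $h$, seeding the decoded region at the boundary via $s_{\sfc}\ge l(\sfc/W)\ge l(l^*)$, propagating the front by bounding the worst-case ramp profile $\hat\sigma^{(a)}_{\sfr}=\min\{1,(\sfr-a)^+/|W_{\sfr}|\}$ with a left Riemann comparison $\sum\ge\int_y^1F+yF(1)=2R_{\textnormal{IT}}-h(y)$, and closing by symmetry and the counting argument for $K$. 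Those steps are correct, and in places cleaner than the paper (whose stated directions for the harmonic-sum inequality and for the $\sigma_{\sfr}^{k+1}$ bound are typos, though its subsequent displays are consistent with the correct directions).

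However, there is a genuine gap, and you flagged it yourself: the transition region $\sfc<W$, where the window mixes indices with $|W_{\sfr}|=\sfr<W$ and $|W_{\sfr}|=W$. This is not a corner case: since the seed front sits at $\lfloor l^*W\rfloor<W$, the first several propagation steps \emph{all} involve mixed windows, so without this estimate the induction never gets started. You assert that ``neither term dominates the other,'' but a clean term-by-term domination does hold, and it is exactly the paper's key lemma: for $b>0$ the map $x\mapsto xF(bx)$ is increasing on $(0,1]$, because $xF(bx)=\EE_{\rho_0}\left[\lambda x/(\lambda bx+\sigma^2)\right]$ has derivative $\EE_{\rho_0}\left[\lambda\sigma^2/(\lambda bx+\sigma^2)^2\right]>0$. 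Applying this with $x=1/|W_{\sfr}|\ge 1/W$ and $b=\sfr-a$ gives $F\bigl((\sfr-a)/|W_{\sfr}|\bigr)/|W_{\sfr}|\ \ge\ F\bigl((\sfr-a)/W\bigr)/W$, i.e.\ every mixed-window term dominates the corresponding bulk term, so the mixed sum reduces to your bulk bound and the front advances by $\lfloor h^*W\rfloor\ge g$ uniformly in position. Note that this uses the specific form of $F$ as an expectation of $\lambda/(\lambda x+\sigma^2)$, not merely that $F$ is decreasing; for a generic decreasing $F$ your worry would be justified, which is precisely why this lemma cannot be skipped. (A minor wording issue: your one-step map is order-preserving, being a composition of two antitone operations, not ``coordinatewise nonincreasing''; order preservation is what your monotone-decay-from-maximal-initialization argument actually needs.)
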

\begin{proof}
    Since the variables $\psi_{\sfc}^{k}$ for $\sfc \in [\Gamma]$ are symmetric about the center column index, i.e. $\psi_{\sfc} = \psi_{\Gamma - \sfc +1}$ for $\sfc <\lceil \Gamma / 2\rceil$, we carry out the analysis for $\sfc \in [\lceil \Gamma/2 \rceil]$ and $\sfr \in [\lceil \Gamma/2 \rceil+W-1]$. With initialization $\sigma_{\sfr}^0=1$ for $\sfr \in [\sfR]$, we have 
    \begin{equation*}
        \tau_{\sfc}^0 = \frac{R_{\textnormal{all}}\vartheta}{F(1)} 
        \begin{cases}
            \left[\sum_{\sfr = \sfc}^{W-1} \frac{1}{\sfr}+ \frac{\sfc}{W}\right]^{-1},  &1\leq \sfc \leq W, \\
            1, &\sfc>W.
        \end{cases}
    \end{equation*}
    Here, $\tau_{\sfc}^0$ is non-decreasing with the block index $\sfc$. Thus, if $R_{\textnormal{all}}< \vartheta^{-1} R_{\textnormal{alg}}$, we have $\tau_{\sfc}^0 \leq \frac{1}{2}$ implying $\psi_{\sfc}^{1} = 0$ for $\sfc \in [\sfC]$.

    Otherwise, if $\vartheta^{-1} R_{\textnormal{alg}}<R_{\textnormal{all}}<\vartheta^{-1} R_{\textnormal{IT}}$. Firstly, we establish that $l(x)$ and $h(x)$ each have a unique solution on $(0,1)$, respectively. Differentiating $l(x)$ over $(0,1)$, we obtain $l^{\prime}(x) = 1 - 1/x <0$, implying $l(x)$ is decreasing for $0<x<1$. Given that $\lim_{x\to0^{+}} l(x) = +\infty$ and $l(1) = 0< (\vartheta R_{\textnormal{all}}) /R_{\textnormal{alg}}$ with continuity of $l(x)$ in $(0,1]$, it follows that equation (\ref{eq:lx}) has a unique solution $l^*(\vartheta)$ on $(0,1)$. We differentiate $h(x)$ over $[0,1]$ to obtain
    \begin{equation*}
        h^{\prime}(x) = F(x) - F(1)
    \end{equation*}
    and since $F(x)$ is decreasing on $(0,1)$, it follows that $h^{\prime}(x)\geq 0$, implying $h(x)$ is increasing on $(0,1)$. Moreover, with $h(0) = 0 < 2 \vartheta \Delta $, $h(1) = 2(R_{\textnormal{IT}}-R_{\textnormal{alg}}) > 2 \vartheta \Delta$ and the continuity of $h(x)$ on $[0,1]$, we conclude that equation (\ref{eq:hx}) has a unique solution $h^*(\vartheta, \Delta)$ in $(0,1)$.

    With the inequality $\sum_{\sfr=\sfc}^{W-1} (1/\sfr) \leq \ln(W / \sfc)$ for $1 \leq \sfc \leq W$, we deduce the following inequality of $\tau_{\sfc}^0$:
    \begin{equation*}
        \tau_{\sfc}^0 \leq \frac{R_{\textnormal{all}}\vartheta}{F(1)} 
        \begin{cases}
            \left[\frac{\sfc}{W} -\ln (\frac{\sfc}{W})\right]^{-1}, & 1 \leq \sfc \leq W, \\
            1, &\sfc > W.
        \end{cases}
    \end{equation*} 
    Since $l(x)$ is decreasing on $(0,1)$, we have $\tau_{\sfc}^0< 1/ 2$, implying $\psi_{\sfc}^0 = 0$ for $1 \leq \sfc \leq g$ ($g$ is a integer greater than or equal to $1$ according to \eqref{eq:W}. 

    Next we consider the subsequent iterations $k \geq 1$. Assume towards induction that $\psi_{\sfc}^k = 0$ for $\sfc \leq g^k$ where $g^k \geq (k+1) g$. Based on the assumption, we deduce the inequality for $\sigma_{\sfr}^{k+1}$ as follows:
    \vspace{-0.5em}
    \begin{equation*}
        \sigma_{\sfr}^{k+1} \geq 
        \begin{cases}
            0, & \sfr \leq g^{k}, \\
            (\sfr-g^{k})/|W_{\sfr}|, &g^k \leq \sfr \leq g^{k}+W, \\
            1, & \sfr > g^k+W.
        \end{cases}
    \end{equation*}
    Using the fact $F(x)$ is decreasing on $[0,1]$ and (\ref{eq: phi_r}), we obtain
    \begin{equation*}
        \phi_{\sfr}^{k+1} \geq 
        \begin{cases}
            F(0), & \sfr \leq g^{k}, \\
            F((\sfr-g^{k})/|W_{\sfr}|), &g^k \leq \sfr \leq g^{k}+W, \\
            F(1), & \sfr > g^k+W.
        \end{cases}
        \vspace{-0.5em}
    \end{equation*}
    
    Since $\psi_{\sfc}^k$ increases with the block index $\sfc \in [\lceil \Gamma / 2\rceil]$, we have $\sigma_{\sfr}^{k+1}$ increases with the block index $\sfr$ for $\sfr \in [\lceil \Gamma/2 \rceil+W-1]$ according to (\ref{eq:sigma_r}).
    Given that $F(x)$ is decreasing on $[0,1]$, we conclude that $\phi_{\sfr}^{k+1} / |W_{\sfr}| = F(\sigma_{\sfr}^{k+1}) / |W_{\sfr}|$ decreases with the block index $\sfr$ for $\sfr \in [\lceil \Gamma/2 \rceil+W-1]$ according to (\ref{eq: phi_r}). Therefore, $\tau_{\sfc}^{k+1}$ increases with the block index $\sfc$ for $\sfc \in \lceil \Gamma / 2\rceil$ according to (\ref{eq:tau_c}). We now consider $\tau_{\sfc}^{k+1}$ for $g^k<\sfc \leq g^k + W$ as follows:
    \begin{equation*} 
        \frac{\tau_{\sfc}^{k+1}}{R_{\textnormal{all}} \vartheta} \leq  \Bigg[\sum_{\sfr=\sfc}^{g^k+W-1} \!\! \dfrac{F((\sfr-g^k)/|W_{\sfr}|)}{|W_{\sfr}|} + \!\!\sum_{\sfr =g^k+W}^{\sfc+W-1} \! \dfrac{F(1)}{|W_{\sfr}|} \Bigg]^{-1}.
    \end{equation*}
    Since $xF(bx)$ is increasing on $[0,1]$ for $b>0$ and $|W_{\sfr}| \leq W$ for $\sfr \in [\sfR]$, we obtain the following inequality:
    \begin{equation*} 
        \frac{\tau_{\sfc}^{k+1}}{R_{\textnormal{all}} \vartheta} \leq  \Bigg[\sum_{\sfr=\sfc}^{g^k+W-1} \!\! \dfrac{F((\sfr-g^k)/|W|)}{|W|} + \!\!\sum_{\sfr =g^k+W}^{\sfc+W-1} \! \dfrac{F(1)}{|W|} \Bigg]^{-1}.
    \end{equation*}
    Given that $F(x)$ is decreasing in $[0,1]$, we have
    \begin{equation*}
        \tau_{\sfc}^{k+1} \leq R_{\textnormal{all}} \vartheta \Bigg[ \int_{\frac{\sfc-g^k}{W}}^1 F(x) \mathrm{d} x + \frac{\sfc-g^k}{W} F(1)\Bigg]^{-1}.
    \end{equation*}
    Finally, since $h(x)$ is decreasing on $(0,1)$, it follows that  $\tau_{\sfc}^{k+1} \leq \frac{1}{2}$, implying $\psi_{\sfc}^{k+1} = 0$ for $\sfc \leq g^{k} +g$. Moreover, since $Kg \geq \lceil \Gamma / 2\rceil$, it follows that $\psi_{\sfc}^K = 0 $ for $\sfc \in [\sfC]$.    
\end{proof}


\begin{figure}[t]
    \centering
    \includegraphics[width=1\linewidth]{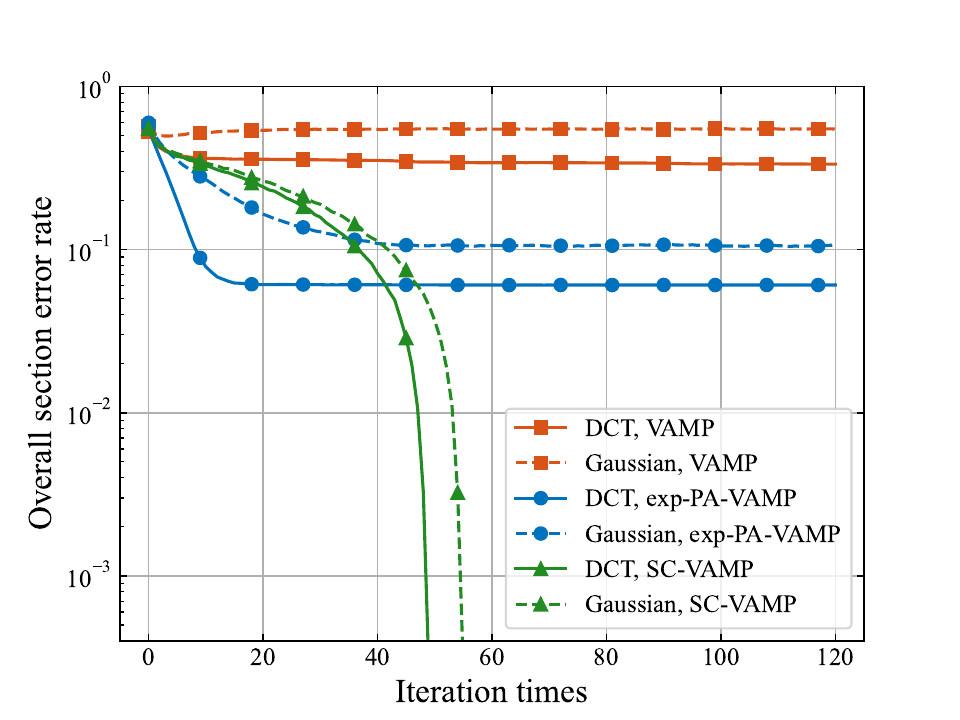}
    \vspace{-5mm}
    \caption{Overall section error rate $\text{SER}^k = (\sum_{\sfc=1}^\sfC \text{SER}^{k}_{\sfc} )/ \sfC$ ran on a random single instance with $L = 2^{14}$, $B=16$ and $\snr = 15$, as a function of the iterations. For SC-VAMP decoder, we set $\Gamma = 16$, $W = 2$ and $\vartheta = 17/16$. The overall rate $R_{\text{all}} = 1.60$ is higher than the algorithm threshold, lower than the information-theoretic threshold proposed in \cite{hou2022sparse}.}
    \label{fig:decoder-SER}
    \vspace{-5mm}
\end{figure}

The Proposition $\ref{thm1}$ establishes that, under $R_{\textnormal{all}} < \vartheta^{-1} R_{\textnormal{IT}}$,
the limit SE iteratively transitions $\psi_{\sfc}^k$ from $1$ to $0$, progressing from the outermost blocks toward the central blocks. And after $K$ iterations, $\psi_{\sfc}^K$ reduces to $0$ for all blocks. Moreover, for sufficiently large $W$ with $\Gamma > W^2$, we have $\vartheta \to 1$ as $W \to \infty$. Consequently, Proposition \ref{thm1} implies that perfect recovery of the message $\bx$ is achievable when $R_{\textnormal{all}} < R_{\textnormal{IT}}$, thereby confirming the the conjectures in \cite{liu2022sparse}.

Using Cauchy–Schwarz inequality, we obtain 
\vspace{-1.1mm}
\begin{equation*}
    \EE_{\rho_0} \left(\frac{\lambda}{\lambda x + \sigma^2}\right) \EE_{\rho_0}(\lambda x + \sigma^2) \leq \EE_{\rho_0} \lambda,
    \vspace{-0.7mm}
\end{equation*}
where the equality holds if and only if $\rho_0 = \delta_1$. This implies that $F(x) \leq 1/(\sigma^2 + x)$, leading to the following bound:
\vspace{-1.1mm}
\begin{equation*}
    R_{\textnormal{IT}} \leq \frac{1}{2}\int_0^1 \frac{1}{\sigma^2 + x} \mathrm{d} x = \frac{1}{2} \log(1 + \text{snr}) = \mathcal{C}. 
    \vspace{-0.7mm}
\end{equation*}
Therefore, the SC-SS code with SC-VAMP decoder is capacity-achieving for AWGNC if all $\rho_{\text{supp}, \sfr}$ converges to $\delta_1$ as $\alpha \to 0$, a condition referred to as the spectra criterion in 
\cite{hou2022sparse,liu2022sparse,xu2023capacity}. 
\section{Numerical results}
We present our numerical results using the following two design matrices: 
\begin{enumerate} 
    \item \textbf{Gaussian matrices.} Each element of $\bA_{\sfr}$ is \iid drawn from $\cN(0, L^{-1})$. The limit spectral distribution $\rho_{\sfr}$ follows the Marchenko–Pastur law, given by $\rho_{\sfr}(\lambda) = (1 - \alpha_{\sfr})\delta_{\lambda,0} + \frac{\sqrt{\left(\lambda-\lambda_{\sfr}^-\right)\left(\lambda_{\sfr}^+-\lambda\right)}}{2 \pi \lambda}$, where $\lambda^{ \pm}_{\sfr}:=(1 \pm \sqrt{\alpha_{\sfr}})^{2}$. 
    \item \textbf{DCT matrices.} To simulate row-orthogonal matrices with the limiting spectral distribution $\rho_{\sfr}(\lambda) = (1 - \alpha_r) \delta_{\lambda,0} + \alpha_{\sfr} \delta_{\lambda,1}$, we randomly choose $M_{\sfr} \leq N_{\sfr}$ rows from a $N_{\sfr} \times N_{\sfr}$ DCT matrix.
\end{enumerate}

In Fig.~\ref{fig:decoder-SER}, we present a comparison between our SC-VAMP decoder, the original VAMP decoder \cite{hou2022sparse}, 
and the exp-PA-VAMP decoder, which incorporates exponential decay power allocation \cite{xu2023capacity}. Our results show that the original VAMP decoder fails to decode successfully because $R_{\text{all}}$ exceeds the algorithmic threshold. In contrast, both the SC-VAMP and the exp-PA-VAMP decoders achieve successful decoding. Notably, the SC-VAMP decoder demonstrates superior empirical performance, achieving an overall section error rate of less than $10^{-3}$. Furthermore, we observe that DCT matrices consistently outperform Gaussian matrices.

Figure~\ref{fig:block_SER} illustrates the dynamics of the section error rate in the SC-VAMP decoding process. We observe that the SC-VAMP decoder successfully decodes progressively from the outermost blocks towards the central blocks, thereby confirming the validity of Proposition \ref{thm1}.

\begin{figure}[t]
    \centering
    \includegraphics[width=1\linewidth]{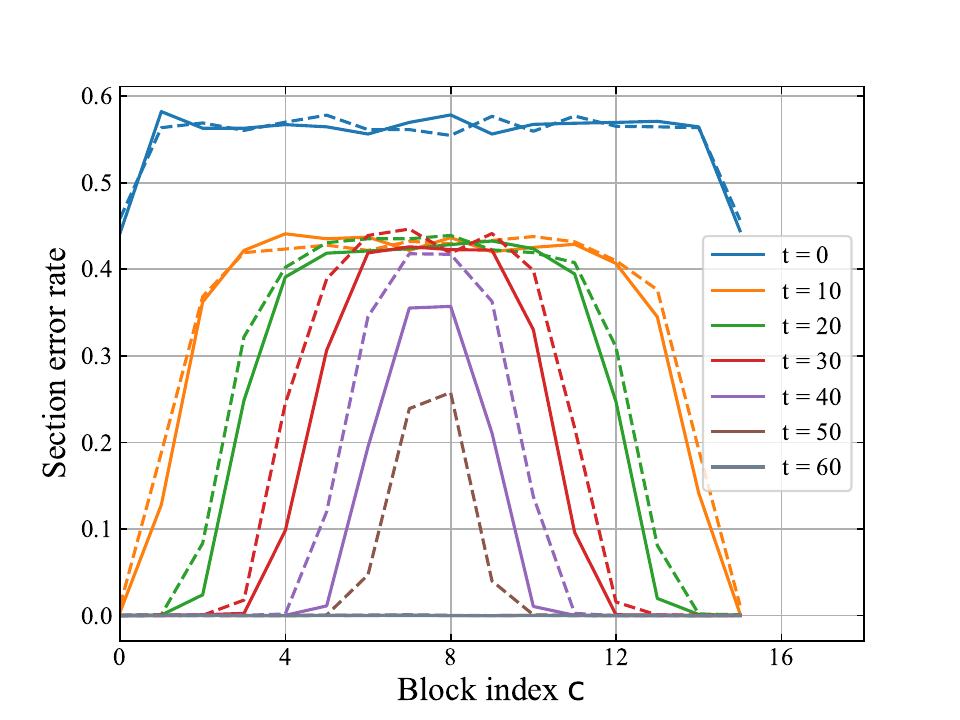}
    \vspace{-5mm}
    \caption{$\text{SER}_{\sfc}^k$ vs. block index $\sfc \in [\sfC]$ for several iteration numbers in a SC-VAMP decoding process, using the same setup as in Fig.\ref{fig:decoder-SER}. The solid line corresponds to DCT matrices, while the dotted line corresponds to Gaussian matrices.}
    \label{fig:block_SER}
    \vspace{-5mm}
\end{figure}
\vspace{-1mm}
\section{Further Work}
\vspace{-1mm}
In practice, we aim to derive non-asymptotic section error rate bounds for
$B, L \to \infty$ with appropriate scaling, e.g., $B = O(L^{\alpha})$
and expect exponential decay of these bounds using techniques from
\cite{rush2021capacity,rush2018finite,li2022non,cademartori2024non}. 
We also expect these results to extend to general memoryless channels, as suggested informally in
\cite{liu2022sparse} from a statistical physics perspective. Further open problems related to SC-SS include designing faster, more robust decoders based on message-passing algorithms in \cite{liu2022memory,fan2022approximate}, and analyzing the performance of SC-SS with more realistic design matrices, such as the class of semi-random matrices \cite{dudeja2023universality,dudeja2024spectral}.

\clearpage
\bibliographystyle{IEEEtran.bst}
\bibliography{ref.bib}

\end{document}